\newtheorem{definition}{Definition}[section]
\newtheorem{theorem}[definition]{Theorem}
\newenvironment{proof}[1][Proof]{\noindent\textbf{#1.} }{\ \rule{0.5em}{0.5em}}
\newcommand{\g}{\ensuremath{\mathfrak{g}}}
\newcommand{\G}{\ensuremath{\mathfrak{G}}}
\newcommand{\ucsc}{Departamento de Matemática y Física Aplicadas, Universidad Católica de la Santísima Concepción, Alonso de Ribera 2850, Concepción, Chile}
\newcommand{\udec}{Departamento de Física, Universidad de Concepción, Casilla 160-C, Concepción, Chile}
\newcommand{\mpi}{Max-Planck-Institut für Gravitationsphysik, Albert Einstein Institute. Am Mühlenberg~1, D-14476 Golm bei Potsdam, Germany}
\begin{document}

\title{Dual Formulation of the Lie Algebra $S$-expansion Procedure}

\author{Fernando Izaurieta}
\email{fizaurie@ucsc.cl}
\affiliation{\ucsc}

\author{Alfredo Pérez}
\email{alfperez@udec.cl}
\affiliation{\udec}
\affiliation{\mpi}

\author{Eduardo Rodríguez}
\email{edurodriguez@ucsc.cl}
\affiliation{\ucsc}

\author{Patricio Salgado}
\email{pasalgad@udec.cl}
\affiliation{\udec}

\date{\today}

\begin{abstract}
The \emph{expansion} of a Lie algebra entails finding a new, bigger algebra \G, through a series of well-defined steps, from an original Lie algebra \g. One incarnation of the method, the so-called $S$-expansion, involves the use of a finite abelian semigroup $S$ to accomplish this task. In this paper we put forward a dual formulation of the $S$-expansion method which is based on the dual picture of a Lie algebra given by the Maurer--Cartan forms. The dual version of the method is useful in finding a generalization to the case of a gauge free differential algebra, which in turn is relevant for physical applications in, e.g., Supergravity. It also sheds new light on the puzzling relation between two Chern--Simons Lagrangians for gravity in $2+1$ dimensions, namely the Einstein--Hilbert Lagrangian and the one for the so-called `exotic gravity.'
\end{abstract}




\maketitle

\section{Introduction}

Lie algebra expansions were introduced by the first time in Ref.~\cite{Hat01}, and the method was subsequently studied in general in Refs.~\cite{deAz02,deAz07,deAz04}.
The idea is to perform a rescaling by a parameter $\lambda$ of some of the group coordinates $g^{i}, i=1,\ldots, \dim \g$. Consequently, the Maurer--Cartan (MC) one-forms $\omega^{i} \left( g, \lambda \right)$ of \g\ are expanded as a power series in $\lambda$.
Inserting these expansions back in the original MC equations for \g, one obtains the MC equations of a new finite-dimensional expanded Lie algebra.

An alternative expansion procedure to the method of power series expansion is the $S$-expansion method~\cite{Iza06b,Iza08b}.
The $S$-expansion method allows us also to obtain new Lie algebras starting from an original one by choosing an Abelian semigroup $S$ and applying the general theorems~4.2 and~6.1 from Ref.~\cite{Iza06b}, which give us, following the terminology introduced in Ref.~\cite{Iza06b}, ``resonant subalgebras'' and what has been dubbed ``reduced algebras.''

In spite of the examples given in Ref.~\cite{Iza06b}, the relation between both procedures has remained as an interesting open problem basically because
(i)~the $S$-expansion is defined as the action of a semigroup $S$ on the generators $T_{A}$ of the algebra and the power series expansion is carried out on the MC forms of the original algebra, and
(ii)~the $S$-expansion is defined on the algebra \g\ without referring to the group manifold, whereas the power series expansion is based on a rescaling of the group coordinates.

It is the purpose of this paper to study the $S$-expansion procedure in the context of the group manifold and then to find the dual formulation of such $S$-expansion procedure.

The article is organized as follows.
In section~\ref{ssexpa} we review the main aspects of the $S$-expansion procedure.
In section~\ref{sdual} we shall construct the dual formulation of the Lie algebra $S$-expansion procedure.
The expansion of free differential algebras is considered in section~\ref{sfree}.
In section~\ref{scs} several applications are shown, e.g., how to obtain $(2+1)$-dimensional Chern--Simons (CS) AdS gravity from the so-called exotic gravity.
Section~\ref{sfin} concludes the work with an outlook to further applications in gravity and supergravity.

\section{\label{ssexpa}$S$-expansion of Lie Algebras}

In this section we shall review the main aspects of the $S$-expansion procedure introduced in Ref.~\cite{Iza06b}.
The $S$-expansion method is based on combining the structure constants of a Lie algebra \g\ with the inner multiplication law of a semigroup $S$ to define the Lie bracket of a new, $S$-expanded algebra $\G = S \times \g$.

Let $S = \left\{ \lambda_{\alpha}, \alpha = 1, \ldots, N \right\}$
be a finite abelian semigroup with ``two-selector'' $K_{\alpha \beta}^{\phantom{\alpha \beta} \gamma}$ defined by
\begin{equation}
K_{\alpha \beta}^{\phantom{\alpha \beta} \gamma} = \left\{
 \begin{array}{cl}
  1, & \text{when } \lambda_{\alpha} \lambda_{\beta} = \lambda_{\gamma} \\
  0, & \text{otherwise},
 \end{array}
\right.
\end{equation}
and let \g\ be a Lie algebra with structure constants $C_{AB}^{\phantom{AB}C}$,
\begin{equation}
\left[ \bm{T}_{A}, \bm{T}_{B} \right] = C_{AB}^{\phantom{AB}C} \bm{T}_{C}.
\end{equation}
Then it may be shown~\cite{Iza06b} that the product $\G = S \times \g$ corresponds to the Lie algebra given by
\begin{equation}
\left[ \bm{T}_{\left( A, \alpha \right)}, \bm{T}_{\left( B, \beta \right)} \right] = 
K_{\alpha \beta}^{\phantom{\alpha \beta} \gamma} C_{AB}^{\phantom{AB}C} \bm{T}_{\left( C, \gamma \right)}.
\label{s2'}
\end{equation}

\begin{theorem}
The product $\left[ \cdot , \cdot \right]$ defined in eq.~(\ref{s2'}) is also a Lie
product because it is linear, antisymmetric and satisfies the Jacobi identity. This
product defines a new Lie algebra characterized by the pair
$\left( \G, \left[ \cdot , \cdot \right] \right)$,
which is called \emph{$S$-expanded Lie algebra}.
\label{tsexp}
\end{theorem}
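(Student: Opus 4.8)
The plan is to verify the three defining properties of a Lie bracket in turn, after extending the product bilinearly from the basis $\left\{ \bm{T}_{\left( A, \alpha \right)} \right\}$ to all of \G. Linearity is then immediate: eq.~(\ref{s2'}) prescribes the bracket on basis elements and bilinearity is imposed by definition, so no structural input from $S$ or \g\ is required at this stage. Closure is equally automatic, since the right-hand side of eq.~(\ref{s2'}) is by construction a linear combination of the basis elements $\bm{T}_{\left( C, \gamma \right)}$.

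For antisymmetry I would compute the bracket with its two arguments exchanged and compare with the original. Swapping produces $K_{\beta \alpha}^{\phantom{\beta \alpha} \gamma} C_{BA}^{\phantom{BA}C} \bm{T}_{\left( C, \gamma \right)}$, and two facts collapse this to minus the original expression. First, $S$ is abelian, so $\lambda_{\beta} \lambda_{\alpha} = \lambda_{\alpha} \lambda_{\beta}$ and hence $K_{\beta \alpha}^{\phantom{\beta \alpha} \gamma} = K_{\alpha \beta}^{\phantom{\alpha \beta} \gamma}$; second, \g\ is a Lie algebra, so $C_{BA}^{\phantom{BA}C} = - C_{AB}^{\phantom{AB}C}$. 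The product of the two selectors thus picks up exactly one sign, which is precisely antisymmetry of the \G-bracket.

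The substantive step is the Jacobi identity. I would expand each of the three cyclically permuted double brackets by applying eq.~(\ref{s2'}) twice. A representative term, $\left[ \left[ \bm{T}_{\left( A, \alpha \right)}, \bm{T}_{\left( B, \beta \right)} \right], \bm{T}_{\left( C, \gamma \right)} \right]$, yields a semigroup factor $K_{\alpha \beta}^{\phantom{\alpha \beta} \rho} K_{\rho \gamma}^{\phantom{\rho \gamma} \sigma}$ (summed over the intermediate index $\rho$) multiplying a \g-factor $C_{AB}^{\phantom{AB}D} C_{DC}^{\phantom{DC}E}$, the result being carried by $\bm{T}_{\left( E, \sigma \right)}$. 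The key observation is that associativity of $S$ makes the contracted two-selector $\sum_{\rho} K_{\alpha \beta}^{\phantom{\alpha \beta} \rho} K_{\rho \gamma}^{\phantom{\rho \gamma} \sigma}$ equal to the indicator of the single condition $\lambda_{\alpha} \lambda_{\beta} \lambda_{\gamma} = \lambda_{\sigma}$; because multiplication in $S$ is single-valued this sum collapses to one term, and together with commutativity the very same indicator arises for all three cyclic terms, independently of how the three semigroup elements are grouped and ordered.

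Consequently I expect this common three-selector to factor out in front of the entire cyclic sum, leaving behind exactly the combination $C_{AB}^{\phantom{AB}D} C_{DC}^{\phantom{DC}E} + C_{BC}^{\phantom{BC}D} C_{DA}^{\phantom{DA}E} + C_{CA}^{\phantom{CA}D} C_{DB}^{\phantom{DB}E}$, which vanishes because \g\ already satisfies the Jacobi identity. The full cyclic sum then vanishes for each fixed value of the output labels $\left( E, \sigma \right)$, which is the Jacobi identity for \G. The only real obstacle is the bookkeeping in this factorization: one must check carefully that associativity and commutativity of $S$ genuinely render the three semigroup factors identical \emph{before} the Jacobi identity of \g\ is invoked, since it is only after the common selector is pulled out that the \g-combination appears in its standard cyclic form. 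Once that identification is secured, the argument closes at once.
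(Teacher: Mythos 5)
Your proof is correct and is precisely the ``direct'' verification the paper alludes to: the paper itself offers no computation and simply cites Ref.~\cite{Iza06b}, where the same argument is carried out (commutativity of $S$ giving $K_{\beta\alpha}^{\phantom{\beta\alpha}\gamma}=K_{\alpha\beta}^{\phantom{\alpha\beta}\gamma}$ for antisymmetry, and associativity plus single-valuedness of the product collapsing the contracted two-selectors into a common three-selector before the Jacobi identity of \g\ is invoked). No gaps.
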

\begin{proof}
The proof is direct and may be found in Ref.~\cite{Iza06b}.
\end{proof}

\section{\label{sdual}Dual Formulation of the $S$-expansion Procedure}

Theorem~\ref{tsexp} implies that, for every abelian semigroup $S$ and Lie algebra \g, the product $\G = S \times \g$ is also a Lie algebra, with a Lie bracket given by eq.~(\ref{s2'}). This in turn means that it must be possible to look at this $S$-expanded Lie algebra \G\ from the dual point of view of the MC forms.

\begin{theorem}
Let $S = \left\{ \lambda_{\alpha}, \alpha = 1, \ldots, N \right\}$
be a finite abelian semigroup and let $\omega^{A}$ be the MC forms for a Lie algebra \g.
Then, the MC forms $\omega^{\left( A, \alpha \right)}$ associated with the $S$-expanded Lie algebra $\G = S \times \g$ [cf. Theorem~\ref{tsexp}] are related to the $\omega^{A}$ by
\begin{equation}
\omega^{A} = \lambda_{\alpha} \omega^{\left( A, \alpha \right)}.
\label{cinco}
\end{equation}
By definition, these forms satisfy the MC equations
\begin{equation}
\mathrm{d} \omega^{\left( C, \gamma \right)} + \frac{1}{2} K_{\alpha \beta}^{\phantom{\alpha \beta} \gamma} C_{AB}^{\phantom{AB}C}
\omega^{\left( A, \alpha \right)} \omega^{\left( B, \beta \right)} = 0.
\label{seis}
\end{equation}
\end{theorem}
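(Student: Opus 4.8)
The plan is to read (\ref{cinco}) as the $S$-expansion analogue of the power-series ansatz $\omega^{A} = \sum_{n} \lambda^{n} \omega^{A,n}$ used in the original expansion method, and then to verify by direct substitution that the components $\omega^{(A,\alpha)}$ so defined satisfy the Maurer--Cartan equations (\ref{seis}). The conceptual device I would use is to regard the semigroup elements $\lambda_{\alpha}$ as a linearly independent basis of the real semigroup algebra $\mathbb{R}S$, and to view the MC forms $\omega^{A}$ of \g\ as taking values in $\mathbb{R}S$. Equation (\ref{cinco}) then simply reads off the component of $\omega^{A}$ along each basis element $\lambda_{\alpha}$, the ordinary differential forms $\omega^{(A,\alpha)}$ being precisely those components. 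Because the $\lambda_{\alpha}$ are constants, the exterior derivative and the wedge product act only on the form factors, while the semigroup multiplication is encoded by $\lambda_{\alpha}\lambda_{\beta} = K_{\alpha\beta}^{\phantom{\alpha\beta}\gamma}\lambda_{\gamma}$, which is the defining property of the two-selector.

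First I would write down the MC equations of the original algebra \g,
\begin{equation}
\mathrm{d}\omega^{C} + \frac{1}{2} C_{AB}^{\phantom{AB}C} \omega^{A} \omega^{B} = 0,
\end{equation}
and substitute the decomposition (\ref{cinco}) into each term. Since the $\lambda_{\alpha}$ are constant elements of $\mathbb{R}S$, the linear term becomes $\lambda_{\gamma}\,\mathrm{d}\omega^{(C,\gamma)}$ and the quadratic term becomes $\tfrac{1}{2}C_{AB}^{\phantom{AB}C}\,\lambda_{\alpha}\lambda_{\beta}\,\omega^{(A,\alpha)}\omega^{(B,\beta)}$. Replacing $\lambda_{\alpha}\lambda_{\beta}$ by $K_{\alpha\beta}^{\phantom{\alpha\beta}\gamma}\lambda_{\gamma}$ collects every contribution onto a single basis element $\lambda_{\gamma}$, so that the whole equation factorizes as
\begin{equation}
\lambda_{\gamma}\left[ \mathrm{d}\omega^{(C,\gamma)} + \frac{1}{2} K_{\alpha\beta}^{\phantom{\alpha\beta}\gamma} C_{AB}^{\phantom{AB}C} \omega^{(A,\alpha)}\omega^{(B,\beta)} \right] = 0.
\end{equation}

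Next I would invoke the linear independence of the basis $\{\lambda_{\gamma}\}$ of $\mathbb{R}S$ to conclude that the bracket must vanish for each value of $\gamma$ separately, which is exactly (\ref{seis}). To close the argument I would remark that (\ref{seis}) is nothing but the Maurer--Cartan equation for a Lie algebra whose structure constants are $K_{\alpha\beta}^{\phantom{\alpha\beta}\gamma} C_{AB}^{\phantom{AB}C}$, i.e. precisely those identified for \G\ in Theorem~\ref{tsexp}; hence the $\omega^{(A,\alpha)}$ genuinely are the MC forms dual to the generators $\bm{T}_{(A,\alpha)}$, and the relation (\ref{cinco}) is consistent.

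The step I expect to require the most care is the interpretation of (\ref{cinco}) itself. One must make precise the sense in which a Maurer--Cartan form of \g\ is expanded on the semigroup elements $\lambda_{\alpha}$ --- most naturally by letting the forms take values in $\mathbb{R}S$ and reading $\omega^{(A,\alpha)}$ as the components in the basis $\{\lambda_{\alpha}\}$ --- and one must confirm that this decomposition is unique. That uniqueness is exactly the linear independence used in the previous step, and it is what licenses passing from the factorized equation to (\ref{seis}) term by term. This is the formal counterpart, in the $S$-expansion setting, of collecting equal powers of $\lambda$ in the original power-series expansion method.
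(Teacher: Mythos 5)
Your proposal is correct, and the central manipulation---trading $\lambda_{\alpha}\lambda_{\beta}$ for $K_{\alpha\beta}^{\phantom{\alpha\beta}\gamma}\lambda_{\gamma}$ to pass between the two sets of Maurer--Cartan equations---is exactly the one the paper uses. The difference is the logical direction. The paper starts from eq.~(\ref{seis}), multiplies by $\lambda_{\gamma}$, and shows that the combinations $\lambda_{\alpha}\omega^{(A,\alpha)}$ then satisfy the MC equations of \g, i.e.\ it runs a consistency check that needs no statement about independence of the $\lambda_{\gamma}$. You run the converse: substitute the ansatz (\ref{cinco}) into the MC equations of \g\ and extract (\ref{seis}) component by component, which forces you to make the decomposition precise (forms valued in the semigroup algebra $\mathbb{R}S$) and to invoke the linear independence of the basis $\{\lambda_{\gamma}\}$ to split the factorized equation. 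You correctly identify that this is the step needing care, and it is supplied for free by the definition of $\mathbb{R}S$. What each approach buys: the paper's version is shorter and sidesteps any formalization of ``$\lambda_{\alpha}$ times a form,'' at the cost of only verifying rather than deriving (\ref{seis}); yours genuinely derives the expanded MC equations from the original ones plus the ansatz, makes the algebraic setting explicit, and is the faithful analogue of collecting equal powers of $\lambda$ in the power-series expansion of de~Azc\'arraga et al., which is precisely the parallel the paper wants to draw. Both establish the theorem as stated.
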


\begin{proof}
The simplest way to check the validity of eq.~(\ref{cinco}) is by multiplying eq.~(\ref{seis}) by $\lambda_{\gamma}$ and using the defining relation for the two-selector $K_{\alpha \beta}^{\phantom{\alpha \beta} \gamma}$, namely
$\lambda_{\alpha} \lambda_{\beta} = K_{\alpha \beta}^{\phantom{\alpha \beta} \gamma} \lambda_{\gamma}$. We get
\begin{align}
\lambda_{\gamma} \left[ \mathrm{d} \omega^{\left( C, \gamma \right)} + \frac{1}{2} K_{\alpha \beta}^{\phantom{\alpha \beta} \gamma} C_{AB}^{\phantom{AB}C}
\omega^{\left( A, \alpha \right)} \omega^{\left( B, \beta \right)} \right] & = 0 \\
 \mathrm{d} \left[ \lambda_{\gamma} \omega^{\left( C, \gamma \right)} \right] + \frac{1}{2} C_{AB}^{\phantom{AB}C}
\left[ \lambda_{\alpha} \omega^{\left( A, \alpha \right)} \right] \left[ \lambda_{\beta} \omega^{\left( B, \beta \right)} \right] & = 0.
\label{epro}
\end{align}
The required identification is obtained by matching eq.~(\ref{epro}) with the MC equations for \g.
This concludes the proof.
\end{proof}

It is perhaps interesting to notice that the relation shown in eq.~(\ref{cinco}) is analogous to the method of power series expansion developed in Ref.~\cite{deAz02}.

\subsection{$0_{S}$-Reduction of $S$-expanded Lie Algebras}

The concept of \emph{reduction} of Lie algebras, and in particular $0_{S}$-reduction, was introduced in Ref.~\cite{Iza06b}. As implied by the name, it involves the extraction of a smaller algebra from a given Lie algebra \g\ when certain conditions are met. In spite of the superficial similarity of the concepts, a reduced algebra is not, in general, a subalgebra of \g~\cite{Iza06b}.

In this section we present the dual formulation for the $0_{S}$-reduction of an $S$-expanded Lie algebra \G, formulated in the language of the MC forms.

Let $S = \left \{ \lambda_{i}, i = 1, \ldots, N\right\} \cup \left\{ \lambda_{N+1} = 0_{S} \right \}$
be an abelian semigroup with zero.
The expanded MC forms $\omega^{\left( A, \alpha \right)}$ are then given by
\begin{equation}
\omega^{A} = \lambda_{i} \omega^{\left( A, i \right)} +
0_{S} \tilde{\omega}^{A},
\label{diecisiete}
\end{equation}
where $\tilde{\omega}^{A} = \omega^{\left( A, N+1 \right)}$.
We shall show that the MC forms $\omega^{\left( A, i \right)}$ by themselves (without including $\tilde{\omega}^{A}$) are those of a Lie algebra---the $0_{S}$-reduced algebra $\G_{R}$.

It can be shown~\cite{Iza06b} that $K_{ij}^{\phantom{ij}k} C_{AB}^{\phantom{AB}C}$ are the structure constants for the $0_{S}$-reduced $S$-expanded algebra $\G_{R}$, which is generated by $\bm{T}_{\left( A, i \right)}$:
\begin{equation}
\left[ \bm{T}_{\left( A, i \right)}, \bm{T}_{\left( B, j \right)} \right] =
K_{ij}^{\phantom{ij}k} C_{AB}^{\phantom{AB}C} \bm{T}_{\left( C, k \right)}.
\end{equation}
Theorem~\ref{t0redLie} below gives the equivalent statement in terms of MC forms.

\begin{theorem}
Let $S = \left \{ \lambda_{i}, i = 1, \ldots, N\right\} \cup \left\{ \lambda_{N+1} = 0_{S} \right \}$
be an abelian semigroup with zero and let $\left \{ \omega^{\left( A, i \right)}, i = 1, \ldots, N\right\} \cup \left\{ \omega^{\left( A, N+1 \right)} = \tilde{\omega}^{A} \right\}$ be the MC forms for the $S$-expanded algebra
$\G = S \times \g$ of \g\ by the semigroup $S$. Then, $\left \{ \omega^{\left( A, i \right)}, i = 1, \ldots, N\right\}$ are the MC forms for the $0_{S}$-reduced $S$-expanded algebra $\G_{R}$.
\label{t0redLie}
\end{theorem}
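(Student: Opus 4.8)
The plan is to obtain the reduced Maurer--Cartan (MC) equations directly from the full ones, eq.~(\ref{seis}), by restricting the free semigroup index to the surviving range. First I would fix $\gamma = k$ with $k \in \left\{ 1, \ldots, N \right\}$ in eq.~(\ref{seis}), while keeping the contracted indices $\alpha, \beta$ ranging over all $N+1$ values of $S$. The entire content of Theorem~\ref{t0redLie} is then the claim that, for such $k$, the right-hand side of eq.~(\ref{seis}) never involves the extra forms $\tilde{\omega}^{A} = \omega^{(A,N+1)}$, so that the set $\left\{ \omega^{(A,i)} \right\}$ is closed under $\mathrm{d}$ among itself.

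The key step is to exploit the zero element $0_{S} = \lambda_{N+1}$. Its defining property $\lambda_{\alpha} 0_{S} = 0_{S}$ translates, in terms of the two-selector, into $K_{\alpha \, (N+1)}^{\phantom{\alpha \, (N+1)} \gamma} = K_{(N+1) \, \alpha}^{\phantom{(N+1) \, \alpha} \gamma} = \delta_{N+1}^{\gamma}$. Hence, whenever $k \le N$, every term in eq.~(\ref{seis}) carrying $\alpha = N+1$ or $\beta = N+1$ comes multiplied by $K_{\cdot \, (N+1)}^{\phantom{\cdot \, (N+1)} k} = 0$ and drops out. The double contraction over $S$ therefore collapses to $\alpha = i$, $\beta = j$ with $i, j \le N$, and I would read off
\[ \mathrm{d} \omega^{(C,k)} + \frac{1}{2} K_{ij}^{\phantom{ij}k} C_{AB}^{\phantom{AB}C} \omega^{(A,i)} \omega^{(B,j)} = 0 . \]
These are exactly the MC equations dictated by the structure constants $K_{ij}^{\phantom{ij}k} C_{AB}^{\phantom{AB}C}$ of the $0_{S}$-reduced algebra $\G_{R}$, which completes the argument.

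The step that requires the most care is conceptual rather than computational. The decoupling is one-directional: in the equation for $\mathrm{d} \tilde{\omega}^{C}$ the reduced forms generically reappear, precisely whenever $\lambda_{i} \lambda_{j} = 0_{S}$, which is the dual manifestation of the fact that $\G_{R}$ is \emph{not} a subalgebra of \G. The main obstacle is thus to resist imposing $\tilde{\omega}^{A} = 0$; the correct statement is merely that these forms do not enter the equations of the surviving sector. Once this closure is established, nothing else need be checked: since it is already known from Ref.~\cite{Iza06b} that $K_{ij}^{\phantom{ij}k} C_{AB}^{\phantom{AB}C}$ are genuine structure constants of the Lie algebra $\G_{R}$, the Jacobi identity and hence the integrability condition $\mathrm{d}^{2} = 0$ are automatically inherited, and $\left\{ \omega^{(A,i)} \right\}$ are by definition the MC forms of $\G_{R}$.
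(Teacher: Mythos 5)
Your proposal is correct and follows essentially the same route as the paper: take the $\gamma = k$ component of eq.~(\ref{seis}) and observe that all two-selector entries with $\alpha = N+1$ or $\beta = N+1$ and upper index $k \le N$ vanish, so the sum collapses to the reduced MC equations with structure constants $K_{ij}^{\phantom{ij}k} C_{AB}^{\phantom{AB}C}$. Your explicit derivation of $K_{\alpha\,(N+1)}^{\phantom{\alpha\,(N+1)}\gamma} = \delta_{N+1}^{\gamma}$ from $\lambda_{\alpha} 0_{S} = 0_{S}$, and the remark that the decoupling is one-directional (so $\G_{R}$ need not be a subalgebra), merely make explicit what the paper leaves implicit.
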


\begin{proof}
The MC forms for the $S$-expanded algebra \G\ satisfy the MC equations [cf.~eq.~(\ref{seis})]
\begin{equation}
\mathrm{d} \omega^{\left( C, \gamma \right)} + \frac{1}{2} K_{\alpha \beta}^{\phantom{\alpha \beta} \gamma} C_{AB}^{\phantom{AB}C}
\omega^{\left( A, \alpha \right)} \omega^{\left( B, \beta \right)} = 0.
\end{equation}
The $\gamma = k$ component reads
\begin{equation}
\mathrm{d} \omega^{\left( C, k \right)} + \frac{1}{2} K_{\alpha \beta}^{\phantom{\alpha \beta} k} C_{AB}^{\phantom{AB}C}
\omega^{\left( A, \alpha \right)} \omega^{\left( B, \beta \right)} = 0.
\end{equation}
Summing over $\alpha$ and $\beta$ and noting that
$K_{i,N+1}^{\phantom{i,N+1} k} = K_{N+1,j}^{\phantom{N+1,j} k} = K_{N+1,N+1}^{\phantom{N+1,N+1} k} = 0$
we get
\begin{equation}
\mathrm{d} \omega^{\left( C, k \right)} + \frac{1}{2} K_{ij}^{\phantom{ij} k} C_{AB}^{\phantom{AB}C}
\omega^{\left( A, i \right)} \omega^{\left( B, j \right)} = 0,
\end{equation}
which shows that $\left \{ \omega^{\left( A, i \right)}, i = 1, \ldots, N\right\}$ are the MC forms for a Lie algebra whose structure constants are $K_{ij}^{\phantom{ij}k} C_{AB}^{\phantom{AB}C}$, as we set out to prove.
\end{proof}

\section{\label{sfree}Generalization to the Case of a Gauge Free Differential Algebra}

A free differential algebra (FDA) is a set of differential forms closed under the action of the exterior product%
~\footnote{The wedge symbol $\wedge$ is omitted throughout this work, although wedge product is always assumed between forms.}
$\wedge$ and the exterior derivative $\mathrm{d}$ \cite{Sul77,DAu82}.

Every Lie algebra \g\ leads to a `gauge' FDA through its MC equations.
To obtain a gauge FDA we replace the MC forms $\omega^{A}$ by the gauge field one-forms $A^{A}$ and introduce their corresponding curvatures $F^{A}$ by
\begin{equation}
F^{A} = \mathrm{d} A^{A} + \frac{1}{2} C_{BC}^{\phantom{BC}A} A^{B} A^{C}.
\label{trece}
\end{equation}
This equation expressing the exterior derivative of the gauge connection $A^{A}$ in terms of the curvature $F^{A}$ must be supplemented with the Bianchi identity, which expresses the exterior derivative of $F^{A}$ in terms of $A^{A}$ and $F^{A}$ itself:
\begin{equation}
\mathrm{d} F^{A} + C_{BC}^{\phantom{BC}A} A^{B} F^{C} = 0.
\label{ebianchi}
\end{equation}
Eqs.~(\ref{trece})--(\ref{ebianchi}) form the gauge FDA on which we shall perform the $S$-expansion.

\begin{theorem}
Let $S = \left\{ \lambda_{\alpha}, \alpha = 1, \ldots, N \right\}$
be a finite abelian semigroup and let \g\ be a Lie algebra.
Then, the `expanded' connection $A^{\left( A, \alpha \right)}$ and curvature $F^{\left( A, \alpha \right)}$ defined by [see eq.~(\ref{cinco})]
\begin{align}
A^{A} & = \lambda_{\alpha} A^{\left( A, \alpha \right)}, \label{catorce} \\
F^{A} & = \lambda_{\alpha} F^{\left( A, \alpha \right)}, \label{eFexp}
\end{align}
form a gauge FDA for the $S$-expanded Lie algebra $\G = S \times \g$, with the defining equations
\begin{align}
\mathrm{d} A^{\left( A, \alpha \right)} + \frac{1}{2}
K_{\beta \gamma}^{\phantom{\beta \gamma}\alpha} C_{BC}^{\phantom{BC}A}
A^{\left( B, \beta \right)} A^{\left( C, \gamma \right)}
& = F^{\left( A, \alpha \right)}, \label{eFexpdef}\\
\mathrm{d} F^{\left( A, \alpha \right)} +
K_{\beta \gamma}^{\phantom{\beta \gamma}\alpha} C_{BC}^{\phantom{BC}A}
A^{\left( B, \beta \right)} F^{\left( C, \gamma \right)} & = 0. \label{eDFexp}
\end{align}
\end{theorem}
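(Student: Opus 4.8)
The plan is to reduce the statement to two facts already established in the excerpt: first, that $K_{\beta \gamma}^{\phantom{\beta \gamma}\alpha} C_{BC}^{\phantom{BC}A}$ are genuine structure constants of $\G = S \times \g$ (Theorem~\ref{tsexp}); and second, that the expansion ansatz~(\ref{catorce})--(\ref{eFexp}) is compatible with the original gauge FDA~(\ref{trece})--(\ref{ebianchi}). The whole argument will closely mirror the proof of the dual Maurer--Cartan result [eqs.~(\ref{cinco})--(\ref{epro})], with the single change that the flat MC forms $\omega^{A}$ are promoted to the gauge connection $A^{A}$ and its curvature $F^{A}$.

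The core computation is the verification by multiplication with $\lambda_{\alpha}$. First I would multiply the proposed curvature definition~(\ref{eFexpdef}) by $\lambda_{\alpha}$, sum over $\alpha$, and invoke the two-selector relation in the form $\lambda_{\beta} \lambda_{\gamma} = K_{\beta \gamma}^{\phantom{\beta \gamma}\alpha} \lambda_{\alpha}$ to carry the commuting semigroup elements inside the exterior derivative and inside the wedge product. Using the identifications~(\ref{catorce}) and~(\ref{eFexp}), the left-hand side should collapse to $\mathrm{d} A^{A} + \frac{1}{2} C_{BC}^{\phantom{BC}A} A^{B} A^{C}$ and the right-hand side to $F^{A}$, thereby reproducing~(\ref{trece}). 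Applying the identical manipulation to~(\ref{eDFexp}) should likewise return $\mathrm{d} F^{A} + C_{BC}^{\phantom{BC}A} A^{B} F^{C} = 0$, i.e.\ the Bianchi identity~(\ref{ebianchi}). This shows that the expanded equations project correctly onto the gauge FDA of \g\ under the ansatz.

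To certify that the expanded equations genuinely constitute a gauge FDA for \G---and not merely a consistent projection---I would note that~(\ref{eFexpdef}) and~(\ref{eDFexp}) are precisely the curvature definition and Bianchi identity written with the structure constants $K_{\beta \gamma}^{\phantom{\beta \gamma}\alpha} C_{BC}^{\phantom{BC}A}$ of \G. Since Theorem~\ref{tsexp} guarantees these are bona fide structure constants, the Bianchi identity~(\ref{eDFexp}) is automatically the integrability condition obtained by differentiating~(\ref{eFexpdef}), exactly as in any gauge theory whose underlying bracket obeys the Jacobi identity. I expect the only delicate point to be the bookkeeping of the form grading when sliding the scalars $\lambda_{\alpha}$ through wedge products of one- and two-forms, together with checking that the antisymmetry of $C_{BC}^{\phantom{BC}A}$ in $B,C$ meshes with the commutativity $\lambda_{\beta}\lambda_{\gamma} = \lambda_{\gamma}\lambda_{\beta}$ so that no spurious sign is generated; because $S$ is abelian this works out just as in the Maurer--Cartan case, so no real obstacle remains.
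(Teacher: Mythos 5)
Your proposal is correct and follows essentially the same route as the paper: both rest on the two-selector relation $\lambda_{\beta}\lambda_{\gamma}=K_{\beta\gamma}^{\phantom{\beta\gamma}\alpha}\lambda_{\alpha}$ and on matching coefficients of the $\lambda_{\alpha}$ to pass between eqs.~(\ref{trece})--(\ref{ebianchi}) and eqs.~(\ref{eFexpdef})--(\ref{eDFexp}); you merely run the computation from the expanded equations back to the original ones, whereas the paper substitutes the expansion into the originals and factors out the $\lambda_{\alpha}$. Your closing remark that (\ref{eDFexp}) is the integrability condition of (\ref{eFexpdef}) via Theorem~\ref{tsexp} is a harmless addition not present in the paper's proof.
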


\begin{proof}
We must show that the expanded connection and curvature defined in eqs.~(\ref{catorce})--(\ref{eFexp}) satisfy equations analogous to~(\ref{trece})--(\ref{ebianchi}), with the structure constants of \g, $C_{AB}^{\phantom{AB}C}$, replaced by the structure constants of \G, $K_{\alpha \beta}^{\phantom{\alpha \beta} k} C_{AB}^{\phantom{AB}C}$ [cf.~eqs.(\ref{eFexpdef})--(\ref{eDFexp})]. Let us start with eq.~(\ref{trece}): substituting eqs.~(\ref{catorce})--(\ref{eFexp}) into eq.~(\ref{trece}) we get
\begin{align}
\lambda_{\alpha} F^{\left( A, \alpha \right)} & =
\mathrm{d} \left[ \lambda_{\alpha} A^{\left( A, \alpha \right)} \right]
+ \frac{1}{2} C_{BC}^{\phantom{BC}A}
\left[ \lambda_{\beta}  A^{\left( B, \beta  \right)} \right]
\left[ \lambda_{\gamma} A^{\left( C, \gamma \right)} \right]
\nonumber \\ & =
\lambda_{\alpha} \left[ \mathrm{d} A^{\left( A, \alpha \right)} + \frac{1}{2}
K_{\beta \gamma}^{\phantom{\beta \gamma}\alpha} C_{BC}^{\phantom{BC}A}
A^{\left( B, \beta \right)} A^{\left( C, \gamma \right)} \right].
\label{quince}
\end{align}
Equating coefficients on both sides we readily recover eq.~(\ref{trece}).
Replacing now eqs.~(\ref{catorce})--(\ref{eFexp}) into eq.~(\ref{ebianchi}) we find
\begin{align}
\mathrm{d} \left[ \lambda_{\alpha} F^{\left( A, \alpha \right)} \right] +
C_{BC}^{\phantom{BC}A}
\left[ \lambda_{\beta}  A^{\left( B, \beta  \right)} \right]
\left[ \lambda_{\gamma} F^{\left( C, \gamma \right)} \right] & = 0
\nonumber \\
\lambda_{\alpha} \left[ \mathrm{d} F^{\left( A, \alpha \right)} +
K_{\beta \gamma}^{\phantom{\beta \gamma}\alpha} C_{BC}^{\phantom{BC}A}
A^{\left( B, \beta \right)} F^{\left( C, \gamma \right)} \right] & = 0,
\end{align}
which finishes the proof.
\end{proof}

\subsection{$0_{\text{S}}$-Reduction of Free Differential Algebras}

Let $S = \left \{ \lambda_{i}, i = 1, \ldots, N\right\} \cup \left\{ \lambda_{N+1} = 0_{S} \right \}$
be an abelian semigroup with zero.
The expanded connection and curvature $A^{\left( A, \alpha \right)}$, $F^{\left( A, \alpha \right)}$ are then given by
\begin{align}
A^{A} = \lambda_{i} A^{\left( A, i \right)} + 0_{S} \tilde{A}^{A}, \\
F^{A} = \lambda_{i} F^{\left( A, i \right)} + 0_{S} \tilde{F}^{A},
\end{align}
where $\tilde{A}^{A} = A^{\left( A, N+1 \right)}$, $\tilde{F}^{A} = F^{\left( A, N+1 \right)}$.
We shall show that the expanded forms $A^{\left( A, i \right)}$, $F^{\left( A, i \right)}$ by themselves (without including either $\tilde{A}^{A}$ nor $\tilde{F}^{A}$) are those of a gauge FDA---a $0_{S}$-reduced gauge FDA.

\begin{theorem}
Let $S = \left \{ \lambda_{i}, i = 1, \ldots, N\right\} \cup \left\{ \lambda_{N+1} = 0_{S} \right \}$
be an abelian semigroup with zero and let
$\left \{ A^{\left( A, i \right)}, i = 1, \ldots, N\right\} \cup \left\{ A^{\left( A, N+1 \right)} = \tilde{A}^{A} \right\}$,
$\left \{ F^{\left( A, i \right)}, i = 1, \ldots, N\right\} \cup \left\{ F^{\left( A, N+1 \right)} = \tilde{F}^{A} \right\}$
be the connection and curvature for the $S$-expanded gauge FDA obtained by expanding $\left( A^{A}, F^{A} \right)$ by the semigroup $S$. Then,
$\left \{ A^{\left( A, i \right)}, i = 1, \ldots, N\right\}$,
$\left \{ F^{\left( A, i \right)}, i = 1, \ldots, N\right\}$,
are the connection and curvature for a new, $0_{S}$-reduced gauge FDA, with the defining equations
\begin{align}
\mathrm{d} A^{\left( A, i \right)} + \frac{1}{2}
K_{j k}^{\phantom{j k}i} C_{BC}^{\phantom{BC}A}
A^{\left( B, j \right)} A^{\left( C, k \right)}
& = F^{\left( A, i \right)}, \label{e0redA} \\
\mathrm{d} F^{\left( A, i \right)} +
K_{j k}^{\phantom{j k}i} C_{BC}^{\phantom{BC}A}
A^{\left( B, j \right)} F^{\left( C, k \right)} & = 0. \label{e0redF}
\end{align}
\label{t0redfda}
\end{theorem}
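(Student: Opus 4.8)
The plan is to mirror the strategy used in the proof of Theorem~\ref{t0redLie}, where the $0_S$-reduction was carried out at the level of Maurer--Cartan forms, now applied to the full gauge FDA of eqs.~(\ref{eFexpdef})--(\ref{eDFexp}) rather than to the MC equations alone. The starting point will be those two defining equations of the $S$-expanded gauge FDA, in which the index $\alpha$ ranges over all of $1,\ldots,N+1$. I would isolate the components labelled by $\alpha = i$ with $i \in \{1,\ldots,N\}$ and show that the contributions involving the $0_S$-sector forms $\tilde{A}^{A}$ and $\tilde{F}^{A}$ drop out automatically, leaving precisely eqs.~(\ref{e0redA})--(\ref{e0redF}).

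First I would take the $\alpha = i$ component of eq.~(\ref{eFexpdef}),
\begin{equation}
\mathrm{d} A^{(A,i)} + \frac{1}{2} K_{\beta \gamma}^{\phantom{\beta \gamma}i} C_{BC}^{\phantom{BC}A} A^{(B,\beta)} A^{(C,\gamma)} = F^{(A,i)},
\end{equation}
where the sum over $\beta,\gamma$ still runs over $1,\ldots,N+1$. The key observation, exactly as in Theorem~\ref{t0redLie}, is that because $\lambda_{N+1} = 0_{S}$ one has $K_{j,N+1}^{\phantom{j,N+1}i} = K_{N+1,k}^{\phantom{N+1,k}i} = K_{N+1,N+1}^{\phantom{N+1,N+1}i} = 0$ for every $i \leq N$: the product of $0_{S}$ with any element lands back in the $0_{S}$-sector ($\gamma = N+1$), never in the $i$-th sector. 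Hence every term in the sum carrying an index $N+1$ — and therefore every term containing a $\tilde{A}^{A}$ — vanishes, and the sum collapses to $\beta = j$, $\gamma = k$ with both indices running over $1,\ldots,N$. This reproduces eq.~(\ref{e0redA}).

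I would then repeat the identical argument on the Bianchi equation~(\ref{eDFexp}): selecting $\alpha = i$ and invoking the same vanishing two-selectors removes every $\tilde{A}^{A}$ and $\tilde{F}^{A}$ contribution, yielding eq.~(\ref{e0redF}). It remains only to observe that eqs.~(\ref{e0redA})--(\ref{e0redF}) are structurally identical to the gauge-FDA equations~(\ref{trece})--(\ref{ebianchi}) with structure constants $K_{jk}^{\phantom{jk}i} C_{BC}^{\phantom{BC}A}$ — precisely those of the $0_S$-reduced algebra $\G_{R}$ appearing in Theorem~\ref{t0redLie} — so the reduced forms genuinely close into a gauge FDA.

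I do not anticipate a serious obstacle: the only substantive point is verifying that the exterior-derivative and Bianchi equations truncate consistently, i.e.\ that no residual $0_S$-sector term survives in the $i$-component, and this is guaranteed entirely by the semigroup property $\lambda_{i} 0_{S} = 0_{S}$. The whole argument is thus a direct transcription of the MC-form reduction to the FDA setting. One mild thing to confirm is that the truncated curvature $F^{(A,i)}$ defined by eq.~(\ref{e0redA}) is compatible with eq.~(\ref{e0redF}); this is automatic, since both equations arise from the same truncation of an already consistent FDA.
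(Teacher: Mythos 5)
Your proposal is correct and follows exactly the same route as the paper's proof: take the $\alpha = i$ component of eqs.~(\ref{eFexpdef})--(\ref{eDFexp}) and use the vanishing of the two-selectors $K_{j,N+1}^{\phantom{j,N+1}i} = K_{N+1,k}^{\phantom{N+1,k}i} = K_{N+1,N+1}^{\phantom{N+1,N+1}i} = 0$ to drop every term involving the $0_{S}$-sector forms. Your added remarks on why those selectors vanish and on the consistency of the truncated Bianchi identity merely spell out details the paper leaves to the reader.
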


\begin{proof}
Since $A^{\left( A, \alpha \right)}$ and $F^{\left( A, \alpha \right)}$ are the connection and curvature for an $S$-expanded FDA, they satisfy the defining eqs.~(\ref{eFexpdef})--(\ref{eDFexp}). To prove eqs.~(\ref{e0redA})--(\ref{e0redF}) it suffices to take the $\alpha = i$ component in eqs.~(\ref{eFexpdef})--(\ref{eDFexp}) and sum over $\beta$ and $\gamma$, taking into account the fact that
$K_{j,N+1}^{\phantom{j,N+1} i} = K_{N+1,k}^{\phantom{N+1,k} i} = K_{N+1,N+1}^{\phantom{N+1,N+1} i} = 0$.
The interested reader will find that the details of the calculation are easily filled in. This concludes the proof.
\end{proof}

\section{\label{scs}Applications to Chern--Simons Theory of Gravity}

Our main interest in developing methods for expanding Lie and gauge free differential algebras lies on their expected applications in gravity and supergravity.

\subsection{Three-dimensional Gravity Revisited}

As already noted by Witten in his classic 1988 paper~\cite{Wit88}, the Einstein field equations for General Relativity in three-dimensional space-time can be derived from the `exotic' Lagrangian
\begin{equation}
L_{\text{ex}} = \omega_{\phantom{a}b}^{a} \mathrm{d} \omega_{\phantom{b}a}^{b} +
\frac{2}{3} \omega_{\phantom{a}b}^{a} \omega_{\phantom{b}c}^{b} \omega_{\phantom{c}a}^{c},
\label{eexo}
\end{equation}
which is a function of the spin connection $\omega^{ab}$ only [the vielbein $e^{a}$ is conspicuously absent from~(\ref{eexo})].

The Lagrangian~(\ref{eexo}) can be written as a CS form for the $\left( 2+1 \right)$-dimensional Lorentz algebra $\mathfrak{L}$,
\begin{equation}
\left[ \bm{J}_{ab}, \bm{J}_{cd} \right] = \eta_{cb} \bm{J}_{ad} - \eta_{ca} \bm{J}_{bd} +
\eta_{db} \bm{J}_{ca} - \eta_{da} \bm{J}_{cb}.
\end{equation}
The $\mathfrak{L}$-valued, one-form gauge connection $\bm{A}$ and two-form curvature $\bm{F}$ read
\begin{align}
\bm{A} & = \frac{1}{2} \omega^{ab} \bm{J}_{ab}, \\
\bm{F} & = \frac{1}{2} R^{ab} \bm{J}_{ab},
\label{29}
\end{align}
where the Lorentz curvature $R^{ab}$ is defined as usual, $R^{ab} = \mathrm{d} \omega^{ab} + \omega_{\phantom{a}c}^{a} \omega^{cb}$.

A straightforward calculation shows that
\begin{equation}
L_{\text{ex}} = 2 \left\langle \bm{A} \mathrm{d} \bm{A} +
\frac{2}{3} \bm{A}^{3} \right\rangle,
\end{equation}
where $\left\langle \cdots \right\rangle$ stands for the following rank two, symmetric invariant tensor:
\begin{equation}
\left\langle \bm{J}_{ab} \bm{J}_{cd} \right\rangle =
\eta_{ad} \eta_{bc} - \eta_{ac} \eta_{bd}.
\label{einvtenL}
\end{equation}

By contrast, the usual Einstein--Hilbert (HE) Lagrangian (with a negative cosmological constant $\Lambda = -3/\ell^2$, where $\ell$ is a length),
\begin{equation}
L_{\text{EH}} = \frac{1}{\ell} \varepsilon_{abc}
\left( R^{ab} + \frac{1}{3\ell^{2}} e^{a} e^{b} \right) e^{c},
\end{equation}
requires that we consider the full AdS algebra $\mathfrak{so} \left( 2,2 \right)$, with the one-form gauge connection
\begin{equation}
\bm{A} = \frac{1}{\ell} e^{a} \bm{P}_{a} + \frac{1}{2} \omega^{ab} \bm{J}_{ab}.
\end{equation}

In this section we cast the relationship between~(\ref{eexo}) and the usual Einstein--Hilbert Lagrangian (with a cosmological constant) in terms of an $S$-expansion of the relevant FDA.

In order to ease the comparison, it proves useful to define%
~\footnote{Here $\varepsilon$ is the Levi-Civita symbol, with $\varepsilon^{012} = \varepsilon_{012} = +1$.}
\begin{align}
\bm{J}^{a} & = \frac{1}{2} \varepsilon^{abc} \bm{J}_{bc}, \\
\omega_{a} & = \frac{1}{2} \varepsilon_{abc} \omega^{bc},
\end{align}
so that $\bm{A}$ and $\bm{F}$ now take the form
\begin{align}
\bm{A} & = \omega_{a} \bm{J}^{a}, \\
\bm{F} & = F_{a} \bm{J}^{a},
\end{align}
with
\begin{equation}
F_{a} = \frac{1}{2} \varepsilon_{abc} R^{bc} =
\mathrm{d} \omega_{a} - \frac{1}{2} \eta_{ab} \varepsilon^{bcd} \omega_{c} \omega_{d}.
\label{eFa}
\end{equation}

From this $F_{a}$ we can construct the following invariant polynomial:
\begin{equation}
P = F^{a} F_{a} = \mathrm{d} L_{\text{CS}},
\end{equation}
which shows that~(\ref{eexo}) is a CS form, quasi-invariant under Lorentz transformations.

\subsection{The $\mathbbm{Z}_{2}$-expansion}

Let us consider the (semi)group
$\mathbbm{Z}_{2} = \left\{ \lambda_{0}, \lambda_{1} \right\}$,
provided with the following product:
\begin{align}
\lambda_{0} \lambda_{0} & = \lambda_{0}, \label{e00} \\
\lambda_{0} \lambda_{1} & = \lambda_{1}, \\
\lambda_{1} \lambda_{0} & = \lambda_{1}, \\
\lambda_{1} \lambda_{1} & = \lambda_{0}. \label{e11}
\end{align}

Following the ideas from section~\ref{sfree}, we define the $S$-expanded spin connection and curvature by
\begin{align}
\omega_{a} & = \lambda_{0} \omega_{a}^{\left( 0 \right)} + \lambda_{1} \omega_{a}^{\left( 1 \right)},
\label{treinta} \\
F_{a} & = \lambda_{0} F_{a}^{\left( 0 \right)} + \lambda_{1} F_{a}^{\left( 1 \right)}.
\label{treintayuno}
\end{align}

Inserting~(\ref{treinta})--(\ref{treintayuno}) in the definition of curvature~(\ref{eFa}),
\begin{equation}
\lambda_{0} F_{a}^{\left( 0 \right)} + \lambda_{1} F_{a}^{\left( 1 \right)} =
\mathrm{d} \left( \lambda_{0} \omega_{a}^{\left( 0 \right)} + \lambda_{1} \omega_{a}^{\left( 1 \right)} \right)
- \frac{1}{2} \eta_{ab} \varepsilon^{bcd}
\left( \lambda_{0} \omega_{c}^{\left( 0 \right)} + \lambda_{1} \omega_{c}^{\left( 1 \right)} \right)
\left( \lambda_{0} \omega_{d}^{\left( 0 \right)} + \lambda_{1} \omega_{d}^{\left( 1 \right)} \right),
\end{equation}
and using the multiplication law~(\ref{e00})--(\ref{e11}), we easily read off the expressions
\begin{align}
F_{a}^{\left( 0 \right)} & =
\mathrm{d} \omega_{a}^{\left( 0 \right)} - \frac{1}{2} \eta_{ab} \varepsilon^{bcd} \left(
\omega_{c}^{\left( 0 \right)} \omega_{d}^{\left( 0 \right)} +
\omega_{c}^{\left( 1 \right)} \omega_{d}^{\left( 1 \right)} \right), \\
F_{a}^{\left( 1 \right)} & =
\mathrm{d} \omega_{a}^{\left( 1 \right)} - \eta_{ab} \varepsilon^{bcd}
\omega_{c}^{\left( 0 \right)} \omega_{d}^{\left( 1 \right)}.
\end{align}

Now we introduce the notation
\begin{align}
\omega_{a}^{\left( 0 \right)} = \omega_{a}, \\
\omega_{a}^{\left( 1 \right)} = \frac{1}{\ell} e_{a},
\end{align}
and find that we can write
\begin{align}
F_{a}^{\left( 0 \right)} & =
\frac{1}{2} \varepsilon_{abc} \left( R^{bc} + \frac{1}{\ell^{2}} e^{b} e^{c} \right), \\
F_{a}^{\left( 1 \right)} & =
\frac{1}{\ell} T_{a},
\end{align}
provided we identify $e^{a}$ with the vielbein, $R^{ab}$ with the Lorentz curvature defined above and $T^{a}$ with the torsion,
$T^{a} = \mathrm{d} e^{a} + \omega^{a}_{\phantom{a}b} e^{b}$.
These last equations correspond to the curvatures of the Lie algebra
$\mathfrak{so} \left( 2, 2 \right)$.
This is a consequence of the fact that the 
$\mathfrak{so} \left( 2, 2 \right)$ algebra can be regarded as a
$\mathbbm{Z}_{2}$-expansion of $\mathfrak{so} \left( 2, 1 \right)$ \cite{Rod06}.

\subsection{Expansion of the Action}

One of the advantages of the $S$-expansion method is that it provides us with an invariant tensor for the $S$-expanded algebra $\G = S \times \g$ in terms of an invariant tensor for \g.

As shown in Ref.~\cite{Iza06b}, a rank 2, symmetric invariant tensor for an $S$-expanded algebra takes the form
\begin{equation}
\left\langle \bm{T}_{\left( A, \alpha \right)} \bm{T}_{\left( B, \beta \right)} \right\rangle_{\G} =
\sigma_{\gamma} K_{\alpha \beta}^{\phantom{\alpha \beta} \gamma} \left\langle \bm{T}_{A} \bm{T}_{B} \right\rangle_{\g},
\end{equation}
where $\sigma_{\gamma}$ are arbitray constants.

Using this result in the $\mathbbm{Z}_{2}$-expansion of the Lorentz algebra $\mathfrak{L}$ we get
\begin{align}
\left\langle \bm{J}_{ab} \bm{J}_{cd} \right\rangle & =
\sigma_{0} \left\langle \bm{J}_{ab} \bm{J}_{cd} \right\rangle_{\mathfrak{L}}, \\
\left\langle \bm{J}_{ab} \bm{P}_{c} \right\rangle & =
\sigma_{1} \left\langle \bm{J}_{ab} \bm{J}_{c} \right\rangle_{\mathfrak{L}}, \\
\left\langle \bm{P}_{a} \bm{P}_{b} \right\rangle & =
\sigma_{0} \left\langle \bm{J}_{a} \bm{J}_{b} \right\rangle_{\mathfrak{L}},
\end{align}
where $\bm{J}_{a} = -\left( 1/2 \right) \varepsilon_{abc} \bm{J}^{bc}$. Inserting~(\ref{einvtenL}) explicitly, we find
\begin{align}
\left\langle \bm{J}_{ab} \bm{J}_{cd} \right\rangle & =
\sigma_{0} \left( \eta_{ad} \eta_{bc} - \eta_{ac} \eta_{bd} \right),
\label{einvtenAdS1} \\
\left\langle \bm{J}_{ab} \bm{P}_{c} \right\rangle & =
\sigma_{1} \varepsilon_{abc}, \\
\left\langle \bm{P}_{a} \bm{P}_{b} \right\rangle & =
\sigma_{0} \eta_{ab}.
\label{einvtenAdS3}
\end{align}
This is perhaps the most important result of this section, so we would like to spell it out: by starting from the Lorentz algebra in $2+1$ dimensions and applying an $S$-expansion with $S = \mathbbm{Z}_{2}$, we find: (i)~the AdS algebra in $2+1$ dimensions, and (ii)~an AdS-invariant, symmetric invariant tensor built from one for the Lorentz algebra. Also note the presence of the arbitrary constants $\sigma_{0}$ and $\sigma_{1}$: they signal the existence of actually \emph{two} independent invariant tensors for the AdS algebra, one built from $\eta_{ab}$ and the other from $\varepsilon_{abc}$.

If we now use the invariant tensor~(\ref{einvtenAdS1})--(\ref{einvtenAdS3}) in the general expression for a CS Lagrangian, we find
\begin{align}
L_{\text{CS}} & = -\frac{1}{2} \sigma_{0} \left( \omega_{\phantom{a}b}^{a} \mathrm{d} \omega_{\phantom{b}a}^{b} +
\frac{2}{3} \omega_{\phantom{a}b}^{a} \omega_{\phantom{b}c}^{b} \omega_{\phantom{c}a}^{c} -
\frac{2}{\ell^{2}} e_{a} T^{a} \right) + \\
& + \frac{\sigma_{1}}{\ell} \varepsilon_{abc} \left[ \left( R^{ab} + \frac{1}{3\ell^{2}} e^{a} e^{b} \right) e^{c} +
\mathrm{d} \left( \frac{1}{8} \omega^{ab} e^{c} \right) \right].
\end{align}
There are two independent terms here: the one proportional to $\sigma_{0}$ provides the `exotic' Lagrangian, while the one proportional to $\sigma_{1}$ contains the EH Lagrangian with a cosmological constant. Note also the presence of the extra torsional term, which ensures the AdS-invariance of the enlarged exotic Lagrangian.

\section{\label{sfin}Comments and Possible Developments}

In this work we have found a relation between two procedures for the expansion of Lie algebras, namely a relation between the so-called `power series expansion method' of Ref.~\cite{deAz02} and the `$S$-expansion Procedure' of Ref.~\cite{Iza06b}. Actually a \emph{dual} formulation of the $S$-expansion method was found, based on the MC forms. It is also shown that the generalization of this procedure permit the construction of $S$-expanded gauge free differential algebras. Finally, as an example of the application of the dual $S$-expansion method, we obtain the $(2+1)$-dimensional CS AdS gravity from the so-called exotic gravity.

Several aspects deserve consideration and many possible developments can be anticipated. A still unsolved problem is to find a relation between five-dimensional CS gravity and general relativity (work in progress).

\begin{acknowledgments}
The authors wish to thank R.~Caroca, J.~Crisóstomo and N.~Merino for enlightening discussions.
F.~I. wishes to thank J.~A.~de~Azcárraga for his warm hospitality at the Universitat de València, where part of this work was done, and for enlightening discussions.
One of the authors (A.~P.) wishes to thank S.~Theisen for his kind hospitality at the MPI für Gravitationsphysik in Golm, where part of this work was done. He is also grateful to the German Academic Exchange Service (DAAD) and the Comisión Nacional de Investigación Científica y Tecnológica (CONICYT), Chile, for financial support.
P.~S. was supported by Fondo Nacional de Desarrollo Científico y Tecnológico (FONDECYT) Grants 1080530 and 1070306 and by the Universidad de Concepción through DIUC Grants 208.011.048 - 1.0.
F.~I. was supported by FONDECYT Grant 11080200, the Vicerrector\'{\i}a de Asuntos Internacionales y Cooperaci\'{o} of the Universitat de València and the Direcci\'{o}n de Perfeccionamiento y Postgrado of the Universidad Cat\'{o}lica de la Santísima Concepción, Chile.
E.~R. was supported by FONDECYT Grant 11080156.
\end{acknowledgments}

\end{document}